\title{Cooperative Interference Management with MISO Beamforming \footnote{This work
was presented in part at IEEE Wireless Communications and Networking
Conference (WCNC), Sydney, Australia, April 18-21, 2010.}
\footnote{R. Zhang is with the Institute for Infocomm Research,
A*STAR, Singapore and the Department of Electrical and Computer
Engineering, National University of Singapore, Singapore (e-mail:
rzhang@i2r.a-star.edu.sg).} \footnote{S. Cui is with the Department
of Electrical and Computer Engineering, Texas A\&M University,
Texas, USA (e-mail: cui@ece.tamu.edu).}}
\author{Rui Zhang and Shuguang Cui}
\begin{document}
\maketitle \thispagestyle{empty}

\vspace{-0.3in}

\begin{abstract}
This correspondence studies the downlink transmission in a
multi-cell system, where multiple base stations (BSs) each with
multiple antennas cooperatively design their respective transmit
beamforming vectors to optimize the overall system performance. For
simplicity, it is assumed that all mobile stations (MSs) are
equipped with a single antenna each, and there is one active MS in
each cell at one time. Accordingly, the system of interests can be
modeled by a multiple-input single-output (MISO) interference
channel (IC), termed as MISO-IC, with interference treated as noise.
We propose a new method to characterize different rate-tuples for
active MSs on the Pareto boundary of the achievable rate region for
the MISO-IC, by exploring the relationship between the MISO-IC and
the cognitive radio (CR) MISO channel. We show that each
Pareto-boundary rate-tuple of the MISO-IC can be achieved in a
decentralized manner when each of the BSs attains its own channel
capacity subject to a certain set of interference-power constraints
(also known as interference-temperature constraints in the CR
system) at the other MS receivers. Furthermore, we show that this
result leads to a new decentralized algorithm for implementing the
multi-cell cooperative downlink beamforming.
\end{abstract}

\begin{keywords}
Beamforming, cooperative multi-cell system, interference channel,
multi-antenna, Pareto optimal, rate region.
\end{keywords}

\setlength{\baselineskip}{1.3\baselineskip}
\newtheorem{definition}{\underline{Definition}}[section]
\newtheorem{fact}{Fact}
\newtheorem{assumption}{Assumption}
\newtheorem{theorem}{\underline{Theorem}}[section]
\newtheorem{lemma}{\underline{Lemma}}[section]
\newtheorem{corollary}{Corollary}
\newtheorem{proposition}{\underline{Proposition}}[section]
\newtheorem{example}{\underline{Example}}[section]
\newtheorem{remark}{\underline{Remark}}
\newtheorem{algorithm}{\underline{Algorithm}}[section]
\newcommand{\mv}[1]{\mbox{\boldmath{$ #1 $}}}

\section{Introduction}

Conventional wireless mobile networks are designed with a cellular
architecture, where base stations (BSs) from different cells control
communications for their associated mobile stations (MSs)
independently. The resulting inter-cell interference is treated as
additive noise and minimized by applying a predesigned frequency
reuse pattern such that the same frequency band is reused only by
non-adjacent cells. Due to the rapidly growing demand for high-rate
wireless multimedia applications, conventional cellular networks
have been pushed towards their throughput limits. Consequently, many
beyond-3G wireless technologies such as WiMAX and 3GPP UMTS Long
Term Evolution (LTE) have relaxed the constraint on the frequency
reuse such that the total frequency band becomes available for reuse
by all cells. However, this factor-one frequency reuse pattern
renders the overall network performance limited by the inter-cell
interference; consequently, more sophisticated interference
management techniques with multi-cell cooperation become crucial.
Among others, one effective method to cope with the inter-cell
interference in the cellular network is via joint signal processing
across different BSs. In this correspondence, we study a particular
type of multi-BS cooperation for the downlink transmission, where we
are interested in evaluating the benefit in terms of network
throughput by cooperatively optimizing the transmit beamforming
vectors for different BSs each with multiple antennas. Notice that
the problem setup of our work is different from that for a fully
cooperative multi-cell system considered in, e.g.,
\cite{Shamai01}-\cite{Rui}, where a central processing unit is
assumed with the global knowledge of all the required downlink
channels and user messages to jointly design the transmitted signals
for all BSs. In contrast, our work focuses on the decentralized
implementation of the multi-cell cooperative downlink beamforming
assuming only the local message and neighboring-channel knowledge at
each BS, which is more practical than implementing the full
baseband-level coordination. It is worth noting that decentralized
multi-cell cooperative downlink beamforming has been studied in
\cite{Yu} to minimize the total power consumption of all BSs to meet
with MSs' individual signal-to-interference-plus-noise ratio (SINR)
targets, based on the uplink-downlink beamforming duality. In this
work, we provide a different design approach for rate-optimal
strategies in decentralized multi-cell cooperative beamforming.

For the purpose of exposition, in this work we consider a simplified
scenario, where each MS is equipped with a single antenna, and at
any given time there is only one active MS in each cell (over a
particular frequency band). Accordingly, we can model the multi-cell
cooperative downlink transmission system as a multiple-input
single-output (MISO) Gaussian interference channel (IC), termed as
MISO-IC. From an information-theoretic viewpoint, the capacity
region of the Gaussian IC, which constitutes all the simultaneously
achievable rates for all users, is still unknown in general
\cite{Han}, while significant progresses have recently been made on
approaching this limit \cite{Tse}. Capacity-approaching techniques
for the Gaussian IC in general require certain signal-level
encoding/decoding cooperations among the users, while a more
pragmatic approach that leads to suboptimal achievable rates of the
users is to allow only single-user encoding and decoding by treating
the interference from other users as additive Gaussian noise. In
this work, we adopt the latter approach to study the design of
cooperative transmit beamforming for the MISO-IC. Particularly, we
focus on the design criterion to achieve different rate-tuples for
the users on the Pareto boundary of the achievable rate region for
the MISO-IC. Due to the coupled signal structure, the achievable
rate region for the MISO-IC with interference treated as noise is in
general a non-convex set,\footnote{It is noted that the non-convex
rate region is obtained without time-sharing (convex-hull operation)
between different achievable rate-tuples. With time-sharing, the
achievable rate region will become a convex set.} which renders the
joint optimization of beamforming vectors to achieve different
Pareto-boundary rate-tuples a challenging task. Note that this
problem has been studied in \cite{Larsson}, where for the special
two-user case, it was shown that the optimal transmit beamforming
vector to achieve a Pareto-boundary rate-pair for the MISO-IC can be
expressed as a linear combination of the zero-forcing (ZF) and
maximum-ratio transmission (MRT) beamformers. The rate maximization
for the IC with interference treated as noise has also been studied
in the literature via various ``pricing'' algorithms (see, e.g.,
\cite{Schimidt} and references therein), while in general the
price-based approach does not achieve the Pareto-optimal rates for
the MISO-IC. In \cite{Jafar}, the maximum sum-rate for the Gaussian
IC with interference treated as noise is characterized in terms of
degrees of freedom (DoF) over the interference-limited regime.

In this correspondence, we develop a new {\it parametrical}
characterization of the Pareto boundary for the MISO-IC in terms of
the interference-power levels at all receivers caused by different
transmitters, also known as the {\it interference temperature} (IT)
levels in the newly emerging ``cognitive radio (CR)'' type of
applications \cite{Haykin}. We show that each Pareto-boundary
rate-tuple can be achieved in a decentralized manner when each of
the users maximizes its own MISO channel capacity subject to a
certain set of IT constraints at the other users' receivers, which
is identical to the CR MISO channel transmit optimization problem
studied in \cite{Zhang08} and thus shares the same solution
structure. We derive new closed-form solutions for the optimal
transmit covariance matrices of all users to achieve an arbitrary
rate-tuple on the Pareto boundary of the MISO-IC rate region, from
which we see that the optimal transmit covariance matrices should
all be {\it rank-one} (i.e., beamforming is optimal).\footnote{We
thank the anonymous reviewer who brought our attention to
\cite{Shang}, in which the authors also showed the optimality of
beamforming to achieve the Pareto-boundary rates for the Gaussian
MISO-IC with interference treated as noise, via a different proof
technique.} Furthermore, we derive the conditions that are necessary
for any particular set of mutual IT constraints across all users to
guarantee a Pareto-optimal rate-tuple for the MISO-IC. Based on
these conditions, we propose a new {\it decentralized} algorithm for
implementing the multi-cell cooperative downlink beamforming. For
this algorithm, all different pairs of BSs independently search for
their mutually desirable IT constraints (with those for the MSs
associated with the other BSs fixed), under which their respective
beamforming vectors are optimized to maximize the individual
transmit rates. This algorithm improves the rates for the BSs in a
pairwise manner until the transmit rates for all BSs converge with
their mutual IT levels.

{\it Notation}: $\mv{I}$ and $\mv{0}$ denote the identity matrix and
the all-zero matrix, respectively, with appropriate dimensions. For
a square matrix $\mv{S}$, ${\rm Tr}(\mv{S})$, $|\mv{S}|$,
$\mv{S}^{-1}$, and $\mv{S}^{1/2}$ denote the trace, determinant,
inverse, and square-root of $\mv{S}$, respectively; and
$\mv{S}\succeq 0$ means that $\mv{S}$ is positive semi-definite
\cite{Boyd}. ${\rm Diag}(\mv{a})$ denotes a diagonal matrix with the
diagonal elements given by $\mv{a}$. For a matrix $\mv{M}$ of
arbitrary size, $\mv{M}^{H}$, $\mv{M}^{T}$, and ${\rm Rank}(\mv{M})$
denote the Hermitian transpose, transpose, and rank of $\mv{M}$,
respectively. $\mathbb{E}[\cdot]$ denotes the statistical
expectation. The distribution of a circularly symmetric complex
Gaussian (CSCG) random vector with the mean vector $\mv{x}$ and the
covariance matrix $\mv{\Sigma}$ is denoted by
$\mathcal{CN}(\mv{x},\mv{\Sigma})$; and $\sim$ stands for
``distributed as''. $\mathbb{C}^{m \times n}$ denotes the space of
$m\times n$ complex matrices. $\|\mv{x}\|$ denotes the Euclidean
norm of a complex vector (scalar) $\mv{x}$. The $\log(\cdot)$
function is with base $2$ by default.

\section{System Model}\label{sec:system model}

We consider the downlink transmission in a cellular network
consisting of $K$ cells, each having a multi-antenna BS to transmit
independent messages to one active single-antenna MS. It is assumed
that all BSs share the same narrow-band spectrum for downlink
transmission. Accordingly, the system under consideration can be
modeled by a $K$-user MISO-IC. It is assumed that the BS in the
$k$th cell, $k=1,\ldots,K$, is equipped with $M_k$ transmitting
antennas, $M_k\geq 1$. The discrete-time baseband signal received by
the active MS in the $k$th cell is then given by
\begin{align}\label{eq:signal model}
y_{k}=\mv{h}_{kk}^H\mv{x}_k+\sum_{j\neq k}^K\mv{h}_{jk}^H\mv{x}_j+
z_k
\end{align}
where $\mv{x}_k\in\mathbb{C}^{M_k\times 1}$ denotes the transmitted
signal from the $k$th BS; $\mv{h}_{kk}^H\in\mathbb{C}^{1\times M_k}$
denotes the direct-link channel for the $k$th MS, while
$\mv{h}_{jk}^H\in\mathbb{C}^{1\times M_j}$ denotes the cross-link
channel from the $j$th BS to the $k$th MS, $j\neq k$; and $z_k$
denotes the receiver noise. It is assumed that
${z}_k\sim\mathcal{CN}(0,\sigma_k^2), \forall k$, and all $z_k$'s
are independent.

We assume independent encoding across different BSs and thus
$\mv{x}_k$'s are independent over $k$. It is further assumed that a
Gaussian codebook is used at each BS and
$\mv{x}_k\sim\mathcal{CN}(\mv{0},\mv{S}_k), k=1,\ldots,K$, where
$\mv{S}_k=\mathbb{E}[\mv{x}_k\mv{x}_k^H]$ denotes the transmit
covariance matrix for the $k$th BS, with
$\mv{S}_k\in\mathbb{C}^{M_k\times M_k}$ and $\mv{S}_k\succeq 0$.
Notice that the CSCG distribution has been assumed for all the
transmitted signals.\footnote{It is worth noting that in
\cite{Jafar09} the authors point out that the CSCG distribution for
the transmitted signals is in general non-optimal for the Gaussian
IC with interference treated as noise, since it can be shown that
the complex Gaussian but not circularly symmetric distribution can
achieve larger rates than the symmetric distribution for some
particular channel realizations.} Furthermore, the interferences at
all the receivers caused by different transmitters are treated as
Gaussian noises. Thus, for a given set of transmit covariance
matrices of all BSs, $\mv{S}_1,\ldots,\mv{S}_K$, the achievable rate
of the $k$th MS is expressed as
\begin{eqnarray}\label{eq:user rates}
R_k(\mv{S}_1,\ldots,\mv{S}_K)=\log\left(1+\frac{\mv{h}_{kk}^H\mv{S}_k\mv{h}_{kk}}{\sum_{j\neq
k}\mv{h}_{jk}^H\mv{S}_j\mv{h}_{jk}+\sigma_k^2}\right).
\end{eqnarray}
Next, we define the achievable rate region for the MISO-IC to be the
set of rate-tuples for all MSs that can be simultaneously achievable
under a given set of transmit-power constraints for the BSs, denoted
by $P_1,\ldots,P_K$:
\begin{align}\label{eq:rate region}
\mathcal{R}\triangleq\bigcup_{\left\{\mv{S}_k\right\}: {\rm
Tr}\left(\mv{S}_k\right)\leq P_k, k=1,\ldots,K} \bigg\{
(r_1,\ldots,r_K): 0\leq r_k\leq R_k(\mv{S}_1,\ldots,\mv{S}_K),
k=1,\ldots,K \bigg\}.
\end{align}
The upper-right boundary of this region is called the {\it Pareto
boundary}, since it consists of rate-tuples at which it is
impossible to improve a particular user's rate, without
simultaneously decreasing the rate of at least one of the other
users. More precisely, the Pareto optimality of a rate-tuple is
defined as follows \cite{Larsson}.

\begin{definition}\label{definition:Pareto optimal}
A rate-tuple $(r_1,\ldots, r_K)$ is {\it Pareto optimal} if there is
no other rate-tuple $(r'_1,\ldots, r'_K)$ with $(r'_1,\ldots,
r'_K)\geq (r_1,\ldots, r_K)$ and $(r'_1,\ldots, r'_K)\neq
(r_1,\ldots, r_K)$ (the inequality is component-wise).
\end{definition}

In this work, we consider the scenario where multiple BSs in the
cellular network cooperatively design their transmit covariance
matrices in order to minimize the effect of the inter-cell
interference on the overall network throughput. In particular, we
are interested in the design criterion to achieve different
Pareto-optimal rate-tuples for the corresponding MISO-IC defined as
above.

It is worth noting that in prior works on characterizing the Pareto
boundary for the MISO-IC with interference treated as noise (see,
e.g., \cite{Larsson} and references therein), it has been assumed
(without proof) that the optimal transmit strategy for users to
achieve any rate-tuple on the Pareto boundary is {\it beamforming},
i.e., $\mv{S}_k$ is a rank-one matrix for all $k$'s. Under this
assumption, we can express $\mv{S}_k$ as
$\mv{S}_k=\mv{w}_k\mv{w}_k^H, k=1,\ldots,K$, where $\mv{w}_k
\in\mathbb{C}^{M_k\times 1}$ denotes the beamforming vector for the
$k$th user. Similarly as in the general case with ${\rm
Rank}(\mv{S}_k)\geq 1$, the achievable rates and rate region of the
MISO-IC with transmit beamforming (BF) can be defined in terms of
$\mv{w}_k$'s. However, it is not evident whether the BF case bears
the same Pareto boundary as the general case with ${\rm
Rank}(\mv{S}_k)\geq 1$ for the MISO-IC. In this work, we will show
that this is indeed the case (see Section \ref{sec:IT}).
Accordingly, we can choose to use the rate and rate-region
expressions in terms of either $\mv{S}_k$'s or $\mv{w}_k$'s to
characterize the Pareto boundary of the MISO-IC, for the rest of
this correspondence.

\begin{figure}
\centering
 \epsfxsize=0.7\linewidth
    \includegraphics[width=9cm]{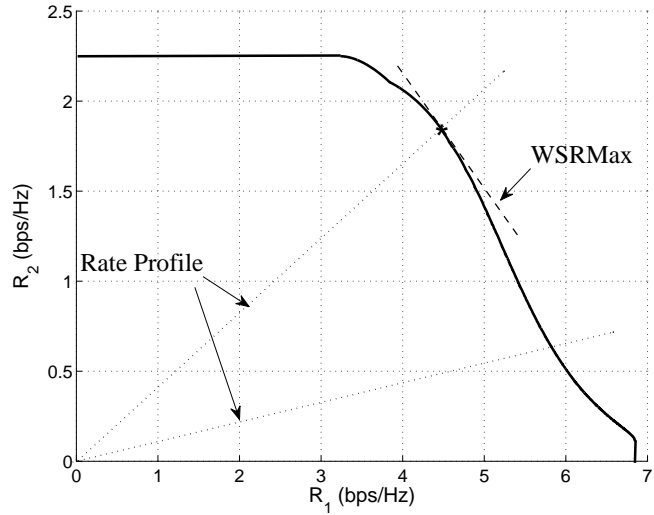}
\caption{Achievable rate region and Pareto boundary for a two-user
MISO Gaussian IC with interference treated as noise.}
\label{fig:rate region}
\end{figure}

In the following, we review some existing approaches to characterize
the Pareto boundary for the MISO-IC with interference treated as
noise. For the purpose of illustration, in Fig. \ref{fig:rate
region}, we show the achievable rate region for a two-user MISO
Gaussian IC with interference treated as noise (prior to any
time-sharing of achievable rate-pairs), which is observed to be
non-convex. A commonly adopted method to obtain the Pareto boundary
for the MISO-IC is via solving a sequence of weighted sum-rate
maximization (WSRMax) problems, each for a given set of user rate
weights, $\mu_k\geq 0, \forall k$, and given by
\begin{align}
\mathop{\mathtt{Max.}}_{\{\mv{w}_k\}} & ~~~ \sum_{k=1}^K\mu_k
\log\left(1+\gamma_k(\mv{w}_1,\ldots,\mv{w}_K)\right)\nonumber \\
\mathtt{s.t.} & ~~~ \|\mv{w}_k\|^2\leq P_k, ~k=1,\ldots,K
\end{align}
where $\gamma_k(\mv{w}_1,\ldots,\mv{w}_K)$ is the receiver SINR for
the $k$th user defined as
\begin{align}
\gamma_k(\mv{w}_1,\ldots,\mv{w}_K)=\frac{\|\mv{h}_{kk}^H\mv{w}_k\|^2}{\sum_{j\neq
k}\|\mv{h}_{jk}^H\mv{w}_j\|^2+\sigma_k^2}, ~k=1,\ldots,K.
\end{align}
It is easy to verify that this problem is non-convex, and thus
cannot be solved efficiently. In addition, the above method cannot
guarantee the finding of all Pareto-boundary points for the MISO-IC
(cf. Fig. \ref{fig:rate region}).

An alternative method to characterize the {\it complete} Pareto
boundary for the MISO-IC is based on the concept of {\it rate
profile} \cite{Zhang06}. Specifically, any rate-tuple on the Pareto
boundary of the rate region can be obtained via solving the
following optimization problem with a particular rate-profile
vector, $\mv{\alpha}=(\alpha_1,\ldots,\alpha_K)$:
\begin{align}\label{eq:SRMax}
\mathop{\mathtt{Max.}}_{R_{\rm sum}, \{\mv{w}_k\}}& ~~ R_{\rm sum} \nonumber \\
\mathtt{s.t.} & ~~
\log\left(1+\gamma_k(\mv{w}_1,\ldots,\mv{w}_K)\right)\geq
\alpha_{k}R_{\rm sum}, ~~k=1,\ldots,K \nonumber \\ & ~~
\|\mv{w}_k\|^2\leq P_k, ~~k=1,\ldots,K
\end{align}
with $\alpha_k$ denoting the target ratio between the $k$th user's
achievable rate and the users' sum-rate, $R_{\rm sum}$. Without loss
of generality, we assume that $\alpha_k\geq 0, \forall k$, and
$\sum_{k=1}^{K}\alpha_k=1$. For a given $\mv{\alpha}$, let the
optimal solution of Problem (\ref{eq:SRMax}) be denoted by $R_{\rm
sum}^*$. Then, it follows that $R_{\rm sum}^*\cdot\mv{\alpha}$ must
be the corresponding Pareto-optimal rate-tuple, which can be
geometrically viewed as (cf. Fig. \ref{fig:rate region}) the
intersection between a ray in the direction of $\mv{\alpha}$ and the
Pareto boundary of the rate region. Thereby, with different
$\mv{\alpha}$'s, solving Problem (\ref{eq:SRMax}) yields the
complete Pareto boundary for the rate region, which does not need to
be convex.

Next, we show that Problem (\ref{eq:SRMax}) can be solved via
solving a sequence of feasibility problems each for a fixed $r_{\rm
sum}$ and expressed as
\begin{align}\label{eq:feasibility problem}
\mathop{\mathtt{Find}}& ~~ \{\mv{w}_k\}\nonumber \\
\mathtt{s.t.} & ~~
\log\left(1+\gamma_k(\mv{w}_1,\ldots,\mv{w}_K)\right)\geq
\alpha_{k}r_{\rm sum}, ~~k=1,\ldots,K \nonumber \\ & ~~
\|\mv{w}_k\|^2\leq P_k, ~~k=1,\ldots,K.
\end{align}
If the above problem is feasible for a given sum-rate target,
$r_{\rm sum}$, it follows that $R_{\rm sum}^*\geq r_{\rm sum}$;
otherwise, $R_{\rm sum}^*< r_{\rm sum}$. Thus, by solving Problem
(\ref{eq:feasibility problem}) with different $r_{\rm sum}$'s and
applying the simple bisection method \cite{Boyd} over $r_{\rm sum}$,
$R_{\rm sum}^*$ can be obtained for Problem (\ref{eq:SRMax}). Let
$\bar{\gamma}_k=2^{\alpha_{k}r_{\rm sum}}-1, k=1,\ldots K$. Then,
for Problem (\ref{eq:feasibility problem}), we can replace the rate
constraints by the equivalent SINR constraints given by
\begin{align}
\gamma_k(\mv{w}_1,\ldots,\mv{w}_K)\geq \bar{\gamma}_k,
~~k=1,\ldots,K
\end{align}
Similarly as shown in \cite{Bengtsson99}, the resultant feasibility
problem can be transformed into a second-order cone programming
(SOCP) problem, which is convex and can be solved efficiently
\cite{CVX}.

\section{Characterizing Pareto Boundary for MISO-IC via Interference Temperature Control} \label{sec:IT}

In this section, instead of investigating centralized approaches, we
present a new method to characterize the Pareto boundary for the
MISO-IC in a distributed fashion, by exploring its relationship with
the CR MISO channel \cite{Zhang08}. We start with the general-rank
transmit covariance matrices $\mv{S}_k$'s for the MISO-IC. First, we
introduce a set of auxiliary variables, $\Gamma_{kj}, k=1,\ldots,K,
j=1,\ldots,K, j\neq k$, where $\Gamma_{kj}$ is called the
interference-power or interference-temperature (IT) constraint from
the $k$th BS to $j$th MS, $j\neq k$, $\Gamma_{kj}\geq 0$. For
notational convenience, let $\mv{\Gamma}$ be the vector consisting
of all $K(K-1)$ different $\Gamma_{kj}$'s, and $\mv{\Gamma}_k$ be
the vector consisting of all $2(K-1)$ different $\Gamma_{kj}$'s and
$\Gamma_{jk}$'s, $j=1,\ldots,K, j\neq k$, for any given
$k\in\{1,\ldots,K\}$.

Next, we consider a set of parallel transmit covariance optimization
problems, each for one of the $K$ BSs in the MISO-IC expressed as
\begin{align}\label{eq:CR capacity}
\mathop{\mathtt{Max.}}_{\mv{S}_k} & ~~~
\log\left(1+\frac{\mv{h}_{kk}^H\mv{S}_k\mv{h}_{kk}}{\sum_{j\neq
k}\Gamma_{jk}+\sigma_k^2}\right) \nonumber \\
\mathtt{s.t.} & ~~~ \mv{h}_{kj}^H\mv{S}_k\mv{h}_{kj}\leq
\Gamma_{kj}, ~\forall j\neq k \nonumber \\ & ~~~ {\rm
Tr}(\mv{S}_k)\leq P_k, ~\mv{S}_k\succeq 0
\end{align}
where $k\in\{1,\ldots,K\}$. Note that in the above problem for a
given $k$, $\mv{\Gamma}_k$ is fixed. For notational convenience, we
denote the optimal value of this problem as $C_k(\mv{\Gamma}_k)$. If
in the objective function of (\ref{eq:CR capacity}) we set
$\Gamma_{jk}=\mv{h}_{jk}^H\mv{S}_j\mv{h}_{jk}, \forall j\neq k$,
$C_k(\mv{\Gamma}_k)$ becomes equal to the maximum achievable rate of
an equivalent MISO CR channel \cite{Zhang08}, where the $k$th BS is
the so-called ``secondary'' user transmitter, and all the other
$K-1$ BSs, indexed by $j=1,\ldots,K, j\neq k$, are the ``primary''
user transmitters, each of which has a transmit covariance matrix,
$\mv{S}_j$, and its intended ``primary'' user receiver is protected
by the secondary user via the IT constraint:
$\mv{h}_{kj}^H\mv{S}_k\mv{h}_{kj}\leq \Gamma_{kj}$. In
\cite{Zhang08}, it was proved that the solution for Problem
(\ref{eq:CR capacity}) is rank-one, i.e., beamforming is optimal,
and in the special case of $K=2$ (i.e., one single primary user), a
closed-form solution for the optimal beamforming vector was derived.
In the following proposition, we provide a new closed-form solution
for Problem (\ref{eq:CR capacity}) with arbitrary values of $K$,
from which we can easily infer that beamforming is indeed optimal.
\begin{proposition}\label{proposition:optimal CR solution}
The optimal solution of Problem (\ref{eq:CR capacity}) is {\it
rank-one}, i.e., $\mv{S}_k=\mv{w}_k\mv{w}_k^H$, and
\begin{equation}\label{eq:optimal BF}
\mv{w}_k=\left(\sum_{j\neq
k}\lambda_{kj}\mv{h}_{kj}\mv{h}_{kj}^H+\lambda_{kk}\mv{I}\right)^{-1}\mv{h}_{kk}\sqrt{p_k}
\end{equation}
where $\lambda_{kj}$, $j\neq k$, and $\lambda_{kk}$, are
non-negative constants (dual variables) corresponding to the $k$th
BS's IT constraint for the $j$th MS and its own transmit-power
constraint, respectively, which are obtained as the optimal
solutions for the dual variables in the dual problem of Problem
(\ref{eq:CR capacity}); and $p_k$ is given by
\begin{align}
p_k=\left(\frac{1}{\ln2}-\frac{\sum_{j\neq
k}\Gamma_{jk}+\sigma_k^2}{\|\mv{A}_k\mv{h}_{kk}\|^2}\right)^+
\frac{1}{\|\mv{A}_k\mv{h}_{kk}\|^2}
\end{align}
where $\mv{A}_k\triangleq\left(\sum_{j\neq
k}\lambda_{kj}\mv{h}_{kj}\mv{h}_{kj}^H+\lambda_{kk}\mv{I}\right)^{-1/2}$
and $(x)^+\triangleq \max(0,x)$.
\end{proposition}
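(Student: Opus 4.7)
My plan is to show convexity of Problem (\ref{eq:CR capacity}), invoke strong duality, extract the KKT conditions, apply a rank argument to the PSD dual multiplier to force beamforming optimality, and then solve the stationarity equation in closed form to recover both the direction $\mv{B}_k^{-1}\mv{h}_{kk}$ and the water-filling magnitude $p_k$.

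First, because $\log(\cdot)$ is strictly increasing and the interference-plus-noise term $\sigma_k^2+\sum_{j\neq k}\Gamma_{jk}$ is independent of $\mv{S}_k$, Problem (\ref{eq:CR capacity}) is equivalent to a linear-objective SDP with linear inequality constraints and the cone constraint $\mv{S}_k\succeq\mv{0}$. Slater's condition holds trivially ($\mv{S}_k=\mv{0}$ is strictly feasible), so strong duality applies and the KKT conditions are necessary and sufficient. I would nevertheless retain the log-objective formulation when writing the Lagrangian, since it is precisely the factor arising from differentiating the log that eventually yields the water-filling expression for $p_k$. Introducing multipliers $\lambda_{kj}\geq 0$ for the IT constraints, $\lambda_{kk}\geq 0$ for the power constraint, and a PSD slack $\mv{M}\succeq\mv{0}$ for $\mv{S}_k\succeq\mv{0}$, the stationarity condition reads
\begin{equation*}
\mv{M}=\mv{B}_k-\frac{1}{\ln 2}\cdot\frac{\mv{h}_{kk}\mv{h}_{kk}^H}{\sigma_k^2+\sum_{j\neq k}\Gamma_{jk}+\mv{h}_{kk}^H\mv{S}_k\mv{h}_{kk}},
\end{equation*}
with $\mv{B}_k\triangleq\sum_{j\neq k}\lambda_{kj}\mv{h}_{kj}\mv{h}_{kj}^H+\lambda_{kk}\mv{I}$, together with complementary slackness $\mv{M}\mv{S}_k=\mv{0}$ and the usual primal/dual feasibility conditions.

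For the rank argument, I would first verify that $\mv{B}_k\succ\mv{0}$ at optimality: in the generic case $\lambda_{kk}>0$ makes this automatic, and otherwise the active IT constraints must span $\mathbb{C}^{M_k}$ to block an unbounded scaling of $\mv{S}_k$, if necessary via an $\epsilon\mv{I}\to\mv{0}$ limiting argument. Once $\mv{B}_k$ is invertible, eigenvalue interlacing (Weyl's inequality) on the rank-one downward perturbation forces $\mathrm{Rank}(\mv{M})\geq M_k-1$; combined with $\mv{M}\succeq\mv{0}$ and $\mv{M}\mv{S}_k=\mv{0}$, this confines the range of $\mv{S}_k$ to the at-most-one-dimensional null space of $\mv{M}$, so $\mathrm{Rank}(\mv{S}_k)\leq 1$ and we may write $\mv{S}_k=\mv{w}_k\mv{w}_k^H$.

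With the rank-one structure in hand, $\mv{M}\mv{w}_k=\mv{0}$ reduces to $\mv{B}_k\mv{w}_k\propto\mv{h}_{kk}$, forcing $\mv{w}_k=\sqrt{p_k}\,\mv{B}_k^{-1}\mv{h}_{kk}$ for some $p_k\geq 0$. Substituting this ansatz back into the stationarity relation, equating the scalar coefficients of $\mv{h}_{kk}$ on both sides, and using $\mv{h}_{kk}^H\mv{B}_k^{-1}\mv{h}_{kk}=\|\mv{A}_k\mv{h}_{kk}\|^2$ reduces everything to a single scalar equation in $p_k$ whose explicit solution is exactly the claimed water-filling expression, with the $(\cdot)^+$ truncation handling the edge case in which the equation has no positive solution and $\mv{w}_k=\mv{0}$ is optimal. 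The main obstacle is the non-degeneracy step ensuring $\mv{B}_k\succ\mv{0}$, since the closed form explicitly inverts $\mv{B}_k$; once invertibility is secured, the rank reduction, the direction of $\mv{w}_k$, and the scalar water-filling for $p_k$ all follow mechanically from the KKT system.
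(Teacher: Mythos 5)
Your proposal is correct in substance but reaches the rank-one conclusion by a genuinely different route from the paper. The paper never writes the KKT system for the primal problem; instead it evaluates the dual function directly, i.e., it solves $\max_{\mv{S}_k\succeq 0}L(\mv{S}_k,\mv{\lambda}_k)$ in closed form for an arbitrary $\mv{\lambda}_k$ by whitening ($\bar{\mv{S}}_k=\mv{B}_k^{1/2}\mv{S}_k\mv{B}_k^{1/2}$), expanding $\bar{\mv{S}}_k$ in its eigenbasis, and observing that the water-filling over the eigenvalues puts all power on the single best direction $\mv{B}_k^{-1/2}\mv{h}_{kk}$; rank-one optimality and the formula for $p_k$ drop out of that inner maximization, and the construction doubles as the subroutine needed to run the ellipsoid method on the dual. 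You instead work at the optimal multipliers: stationarity gives $\mv{M}=\mv{B}_k-\alpha\mv{h}_{kk}\mv{h}_{kk}^H$ with $\alpha>0$, interlacing for a rank-one downward perturbation of $\mv{B}_k\succ 0$ gives $\mathrm{Rank}(\mv{M})\geq M_k-1$, and complementary slackness confines $\mv{S}_k$ to the at-most-one-dimensional kernel of $\mv{M}$ --- the standard SDP rank-reduction argument (essentially the technique of Shang et al.\ cited in the paper's footnote). Your route is shorter and isolates the structural claim cleanly, but it only characterizes the solution at the (unknown) optimal $\mv{\lambda}_k$, whereas the paper's route also yields the computable dual function. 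Two small repairs you should make: $\mv{S}_k=\mv{0}$ is \emph{not} strictly feasible (it sits on the boundary of the PSD cone, and meets any IT constraint with $\Gamma_{kj}=0$ with equality); take $\mv{S}_k=\epsilon\mv{I}$ with small $\epsilon>0$ instead. And your justification that $\mv{B}_k\succ 0$ via ``blocking an unbounded scaling of $\mv{S}_k$'' is misdirected, since the primal feasible set is always bounded by the power constraint; the clean argument is the paper's, namely that a singular $\mv{B}_k$ makes the dual function $g(\mv{\lambda}_k)$ unbounded (using $\mv{h}_{kk}^H\mv{v}_k\neq 0$ with probability one for a null vector $\mv{v}_k$ of $\mv{B}_k$), which is impossible at the dual optimum because strong duality makes $g$ finite there. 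With those two fixes, your scalar substitution recovering $p_k$ from $\mv{h}_{kk}^H\mv{B}_k^{-1}\mv{h}_{kk}=\|\mv{A}_k\mv{h}_{kk}\|^2$ is exactly the paper's water-filling expression.
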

\begin{proof}
Please see Appendix \ref{appendix:proof optimality rank one}.
\end{proof}

Now, we are ready to present a {\it parametrical} characterization
of the Pareto boundary for the MISO-IC in terms of $\mv{\Gamma}$ as
follows.
\begin{proposition}\label{proposition:relationship}
For any rate-tuple $(R_1,\ldots,R_K)$ on the Pareto boundary of the
MISO-IC rate region defined in (\ref{eq:rate region}), which is
achievable with a set of transmit covariance matrices,
$\mv{S}_1,\ldots,\mv{S}_K$, there is a corresponding
interference-power constraint vector, $\mv{\Gamma}\geq 0$, with
$\Gamma_{kj}=\mv{h}_{kj}^H\mv{S}_k\mv{h}_{kj}, \forall j\neq k,
j\in\{1,\ldots,K\}$, and $k\in\{1,\ldots,K\}$, such that
$R_k=C_k(\mv{\Gamma}_k), \forall k$, and $\mv{S}_k$ is the optimal
solution of Problem (\ref{eq:CR capacity}) for the given $k$.
\end{proposition}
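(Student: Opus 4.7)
The plan is to prove this by contradiction, exploiting the definition of Pareto optimality directly. First, given the Pareto-optimal tuple $(R_1,\ldots,R_K)$ and its achieving covariance matrices $\mv{S}_1,\ldots,\mv{S}_K$, I would \emph{define} the IT vector explicitly by setting $\Gamma_{kj} := \mv{h}_{kj}^H\mv{S}_k\mv{h}_{kj}$ for every ordered pair $k\neq j$. With this choice, each $\mv{S}_k$ automatically satisfies its IT constraints $\mv{h}_{kj}^H\mv{S}_k\mv{h}_{kj}\leq \Gamma_{kj}$ (with equality) and the original power constraint ${\rm Tr}(\mv{S}_k)\leq P_k$, so $\mv{S}_k$ is feasible for Problem (\ref{eq:CR capacity}) associated with BS $k$. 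Moreover, substituting $\Gamma_{jk}=\mv{h}_{jk}^H\mv{S}_j\mv{h}_{jk}$ in the denominator of the objective in (\ref{eq:CR capacity}) makes the objective value at $\mv{S}_k$ coincide exactly with $R_k(\mv{S}_1,\ldots,\mv{S}_K)=R_k$. Consequently $C_k(\mv{\Gamma}_k)\geq R_k$ for every $k$.

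Next, I would suppose for contradiction that for some index $k$ this inequality is strict, i.e.\ $C_k(\mv{\Gamma}_k)>R_k$, and $\mv{S}_k$ is therefore not optimal in (\ref{eq:CR capacity}). Let $\tilde{\mv{S}}_k$ be any feasible covariance strictly improving the objective. The two crucial observations are then: (i) because $\tilde{\mv{S}}_k$ satisfies the IT constraints, $\mv{h}_{kj}^H\tilde{\mv{S}}_k\mv{h}_{kj}\leq \Gamma_{kj}=\mv{h}_{kj}^H\mv{S}_k\mv{h}_{kj}$ for every $j\neq k$, so replacing $\mv{S}_k$ by $\tilde{\mv{S}}_k$ in the IC cannot \emph{increase} the interference at any other receiver; (ii) since the signal terms $\mv{h}_{jj}^H\mv{S}_j\mv{h}_{jj}$ for $j\neq k$ and the noise variances are unchanged, this implies $R_j(\mv{S}_1,\ldots,\tilde{\mv{S}}_k,\ldots,\mv{S}_K)\geq R_j$ for all $j\neq k$, while the rate of user $k$ strictly increases because its own numerator strictly grows and its denominator is untouched.

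The modified tuple thus componentwise dominates $(R_1,\ldots,R_K)$ and differs from it in the $k$th coordinate, contradicting Pareto optimality per Definition \ref{definition:Pareto optimal}. Hence $C_k(\mv{\Gamma}_k)=R_k$ and $\mv{S}_k$ is indeed optimal for Problem (\ref{eq:CR capacity}), which is precisely the claim.

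I do not anticipate a serious technical obstacle here; the argument is essentially a ``fixed-point'' style use of the Pareto definition together with the design choice that the IT bounds be set to the \emph{actual} interference levels induced by the optimal IC covariances. The only point that deserves a careful sentence is the monotonicity step (i)--(ii): one must note that the interference from users $i\notin\{k,j\}$ at receiver $j$ is completely unaffected by swapping $\mv{S}_k$ for $\tilde{\mv{S}}_k$, so only the $k\to j$ interference term changes, and it can only weakly decrease. Beyond this, the proposition follows immediately, and together with Proposition \ref{proposition:optimal CR solution} it also yields the rank-one property of every Pareto-optimal $\mv{S}_k$ as a free byproduct.
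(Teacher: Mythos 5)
Your proof is correct and follows essentially the same route as the paper's: set the IT levels equal to the actual cross-link interference powers induced by the Pareto-optimal covariances, observe that each $\mv{S}_k$ is feasible for Problem (\ref{eq:CR capacity}) with objective value exactly $R_k$, and derive a contradiction with Pareto optimality from any feasible point with a strictly larger objective, using the fact that the IT constraints prevent the interference at the other receivers from increasing. Your framing of the contradiction as $C_k(\mv{\Gamma}_k)>R_k$ is in fact marginally cleaner than the paper's assumption that the optimal $\mv{S}_k^{\star}\neq\mv{S}_k$, which tacitly presumes the strict-improvement it needs.
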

\begin{proof}
Please see Appendix \ref{appendix:proof relationship}.
\end{proof}

From Proposition \ref{proposition:relationship}, it follows that the
Pareto boundary for the MISO-IC is parameterized in terms of a
lower-dimensional manifold over the non-negative real vector
$\mv{\Gamma}$, in comparison with that over the complex transmit
covariance matrices, $\mv{S}_k$'s, or with that over the complex
beamforming vectors, $\mv{w}_k$'s. Furthermore, by combining
Propositions \ref{proposition:optimal CR solution} and
\ref{proposition:relationship}, it follows that {\it beamforming} is
indeed optimal to achieve any rate-tuple on the Pareto boundary for
the MISO-IC.

\begin{remark}
It is worth noting that the dimensionality reduction approach
proposed in this work for characterizing the Pareto boundary of the
MISO-IC is in spirit similar to that proposed in \cite{Larsson},
where it has been shown that the transmit beamforming vectors to
achieve any Pareto-boundary rate-tuple of the $K$-user MISO-IC with
interference treated as noise can be expressed in the following
forms:
\begin{align}\label{eq:old BF solution}
\mv{w}_k=\sum_{j=1}^K\xi_{kj}\mv{h}_{kj}, ~~k=1,\ldots,K
\end{align}
where $\xi_{kj}$'s are complex coefficients. Note that under the
assumption of independent $\mv{h}_{kj}$'s, the above beamforming
structure is non-trivial only when $M_k>K$. For this case, from
Remark \ref{remark:KKT} in Appendix \ref{appendix:proof optimality
rank one}, it is known that for the optimal beamforming structure
given in (\ref{eq:optimal BF}), we have $\lambda_{kk}>0$. With this
and by applying the matrix inversion lemma \cite{Horn}, it can be
shown (the detailed proof is omitted here for brevity) that the
optimal beamforming structure given by (\ref{eq:optimal BF}) is
indeed in accordance with that given by (\ref{eq:old BF solution}).
The main difference for these two methods to characterize the Pareto
boundary for the MISO-IC lies in their adopted parameters: The
method in our work uses $K(K-1)$ real $\Gamma_{kj}$'s, while that in
\cite{Larsson} uses $K(K-1)$ complex $\xi_{kj}$'s. Note that
$\Gamma_{kj}$ corresponds to the IT constraint from the $k$th user
transmitter to the $j$th user receiver, whereas there is no
practical meaning associated with $\xi_{kj}$. Consequently, as will
be shown next, the proposed method in our work leads to a practical
algorithm to implement the multi-cell cooperative downlink
beamforming, via iteratively searching for mutually desirable IT
constraints between different pairs of BSs.
\end{remark}

\section{Decentralized Algorithm for Multi-Cell Cooperative Beamforming} \label{sec:algorithm}

In this section, we develop a new {\it decentralized} algorithm that
practically implements the multi-cell cooperative downlink
beamforming based on the results derived in the previous section. It
is assumed that each BS in the cellular network has the perfect
knowledge of the channels from it to all MSs. Furthermore, it is
assumed that all BSs operate according to the same protocol
described as follows. Initially, a set of prescribed IT constraints
in $\mv{\Gamma}$ are set over the whole network, and the $k$th BS is
informed of its corresponding $\mv{\Gamma}_k, k=1,\ldots,K$.
Accordingly, each BS sets its own transmit beamforming vector via
solving Problem (\ref{eq:CR capacity}) and sets its transmit rate
equal to the optimal objective value of Problem (\ref{eq:CR
capacity}), which is achievable for its MS since the actual IT
levels from the other BSs must be below their prescribed
constraints. Then, by assuming that there is an error-free link
between each pair of BSs, all different pairs of BSs start to
communicate with each other for updating their mutual IT constraints
(the details are given later in this section), under which each pair
of BSs reset their respective beamforming vectors via solving
Problem (\ref{eq:CR capacity}) such that the achievable rates for
their MSs both get improved. Notice that each pair of updating BSs
keeps the IT constraints for the MSs associated with the other BSs
excluding this pair fixed; and as a result, the transmit rates for
all the other MSs are not affected. Therefore, the above algorithm
can be implemented in a pairwise decentralized manner across the
BSs, while it converges when there are no incentives for all
different pairs of BSs to further update their mutual IT
constraints.

Next, we focus on the key issue on how to update the mutual IT
constraints for each particular pair of BSs to guarantee the rate
increase for both of their MSs. To resolve this problem, in the
following proposition, we first provide the necessary conditions for
any given $\mv{\Gamma}\geq 0$ (component-wise) to correspond to a
Pareto-optimal rate-tuple for the MISO-IC, which will lead to a
simple rule for updating the mutual IT constraints between different
pairs of BSs. Note that from Proposition
\ref{proposition:relationship}, it follows that for any
Pareto-optimal rate-tuple of the MISO-IC, there must exist a
$\mv{\Gamma}$ such that the optimal solutions of the problems given
in (\ref{eq:CR capacity}) for all $k$'s are the same as those for
the general formulation of MISO-IC to achieve this rate-tuple.
However, for any given $\mv{\Gamma}\geq 0$, it remains unknown
whether this value of $\mv{\Gamma}$ will correspond to a
Pareto-optimal rate-tuple for the MISO-IC.

\begin{proposition}\label{proposition:conditions}
For an arbitrarily chosen $\mv{\Gamma}\geq 0$, if the optimal rate
values of the problems in (\ref{eq:CR capacity}) for all $k$'s,
$C_k(\mv{\Gamma}_k)$'s, are Pareto-optimal on the boundary of the
MISO-IC rate region defined in (\ref{eq:rate region}), then for any
pair of $(i,j), i\in\{1,\ldots,K\}, j\in\{1,\ldots,K\}$, and $i\neq
j$, it must hold that $|\mv{D}_{ij}|=0$, where $\mv{D}_{ij}$ is
defined as
\begin{eqnarray}\label{eq:D}
\mv{D}_{ij}=\left[\begin{array}{cc} \frac{\partial
C_i\left(\mv{\Gamma}_i\right)}{\partial\Gamma_{ij}} & \frac{\partial
C_i\left(\mv{\Gamma}_i\right)}{\partial\Gamma_{ji}} \vspace{0.1in} \\
\frac{\partial C_j\left(\mv{\Gamma}_j\right)}{\partial\Gamma_{ij}} &
\frac{\partial C_j\left(\mv{\Gamma}_j\right)}{\partial\Gamma_{ji}}
\end{array}\right].
\end{eqnarray}
\end{proposition}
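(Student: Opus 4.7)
The plan is to argue by contradiction, exploiting the fact that $\Gamma_{ij}$ and $\Gamma_{ji}$ are the only entries of $\mv{\Gamma}$ that appear simultaneously in $\mv{\Gamma}_i$ and $\mv{\Gamma}_j$. First I would observe that for any $k\notin\{i,j\}$, the vector $\mv{\Gamma}_k$ consists of entries $\Gamma_{kl},\Gamma_{lk}$ with $l\neq k$, none of which coincides with $\Gamma_{ij}$ or $\Gamma_{ji}$. Consequently, if I perturb only the pair $(\Gamma_{ij},\Gamma_{ji})$ while holding every other component of $\mv{\Gamma}$ fixed, then $C_k(\mv{\Gamma}_k)$ is unchanged for all $k\notin\{i,j\}$, whereas the pair $(C_i,C_j)$ varies to first order exactly according to the $2\times 2$ Jacobian $\mv{D}_{ij}$ given in (\ref{eq:D}).

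Next, I would suppose toward a contradiction that $|\mv{D}_{ij}|\neq 0$. Then $\mv{D}_{ij}$ is invertible, so I can solve $\mv{D}_{ij}(\delta_{ij},\delta_{ji})^{T}=(1,1)^{T}$ for a direction $(\delta_{ij},\delta_{ji})$. For a sufficiently small $t>0$, the perturbation $\Gamma_{ij}'=\Gamma_{ij}+t\delta_{ij}$, $\Gamma_{ji}'=\Gamma_{ji}+t\delta_{ji}$ yields $C_i(\mv{\Gamma}_i')=C_i(\mv{\Gamma}_i)+t+o(t)$ and $C_j(\mv{\Gamma}_j')=C_j(\mv{\Gamma}_j)+t+o(t)$, both strictly larger than the original values. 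To convert this into an actual MISO-IC rate-tuple improvement, I would invoke Proposition \ref{proposition:relationship}, which supplies covariance matrices $\mv{S}_l^{\star}$ that are optimal in (\ref{eq:CR capacity}) and satisfy $\Gamma_{lm}=\mv{h}_{lm}^{H}\mv{S}_l^{\star}\mv{h}_{lm}$ (i.e., tight IT constraints) at the given Pareto-optimal point. I would then replace $\mv{S}_i^{\star}$ and $\mv{S}_j^{\star}$ by the CR optimizers $\mv{S}_i'$ and $\mv{S}_j'$ of the perturbed problems (with $\mv{\Gamma}_i',\mv{\Gamma}_j'$) while leaving every $\mv{S}_l^{\star}$, $l\notin\{i,j\}$, untouched. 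Feasibility of $\mv{S}_i',\mv{S}_j'$ in their perturbed instances ensures $\mv{h}_{il}^{H}\mv{S}_i'\mv{h}_{il}\leq\Gamma_{il}'$ and $\mv{h}_{jl}^{H}\mv{S}_j'\mv{h}_{jl}\leq\Gamma_{jl}'$ for every $l$, so the true interference at every receiver is at most the sum of the new IT values, which in turn forces the achieved MISO-IC rate of user $l$ to be at least $C_l(\mv{\Gamma}_l')$. This matches the original rate for $l\notin\{i,j\}$ and strictly exceeds it for $l\in\{i,j\}$, contradicting Pareto-optimality and forcing $|\mv{D}_{ij}|=0$.

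The main obstacle I anticipate is on the regularity side of the argument. First, the partial derivatives appearing in $\mv{D}_{ij}$ must actually exist, which I would justify through standard sensitivity results for the convex program (\ref{eq:CR capacity}): $C_k$ is concave in the right-hand sides of its linear IT constraints and hence is differentiable wherever the optimal dual multipliers are unique. Second, the perturbation $(t\delta_{ij},t\delta_{ji})$ must respect the non-negativity $\mv{\Gamma}\geq 0$; this is automatic in the strict interior $\Gamma_{ij},\Gamma_{ji}>0$, and at the boundary one would have to restrict $(\delta_{ij},\delta_{ji})$ to the feasible cone or appeal to a generic-position assumption. Modulo these technical caveats, the perturbation-plus-achievability argument above yields a short and direct proof.
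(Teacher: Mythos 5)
Your proposal is correct and follows essentially the same route as the paper's proof: assume $|\mv{D}_{ij}|\neq 0$, perturb $(\Gamma_{ij},\Gamma_{ji})$ along a direction $\mv{d}_{ij}$ with $\mv{D}_{ij}\mv{d}_{ij}>0$, keep the other users' covariances fixed, and use feasibility of the perturbed CR optimizers (actual interference bounded by the IT levels) to exhibit a rate-tuple dominating $(R_1,\ldots,R_K)$, contradicting Pareto-optimality. Your added remarks on differentiability of $C_k$ and on respecting $\mv{\Gamma}\geq 0$ at the boundary address regularity issues the paper passes over silently, but they do not change the argument.
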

\begin{proof}
Please see Appendix \ref{appendix:proof conditions}.
\end{proof}

Note that $\mv{D}_{ij}$'s for all different pairs of $(i,j)$ can be
obtained from the (primal and dual) solutions of the problems given
in (\ref{eq:CR capacity}) for all $k$'s with the given $\mv{\Gamma}$
(for the details, please refer to Appendix \ref{appendix:proof
optimality rank one}). More specifically, we have
\begin{align}\label{eq:a}
\frac{\partial
C_i\left(\mv{\Gamma}_i\right)}{\partial\Gamma_{ij}}=\lambda_{ij}
\end{align}
where $\lambda_{ij}$ is the solution for the dual problem of Problem
(\ref{eq:CR capacity}) with $k=i$, which corresponds to the $j$th IT
constraint, and from the objective function of Problem (\ref{eq:CR
capacity}),
\begin{align}\label{eq:b}
\frac{\partial
C_i\left(\mv{\Gamma}_i\right)}{\partial\Gamma_{ji}}=\frac{-\mv{h}_{ii}^H\mv{S}_i^{\star}\mv{h}_{ii}}{\ln2(\sum_{l\neq
i}\Gamma_{li}+\sigma_i^2) (\sum_{l\neq
i}\Gamma_{li}+\sigma_i^2+\mv{h}_{ii}^H\mv{S}_i^{\star}\mv{h}_{ii})}
\end{align}
where $\mv{S}_i^{\star}$ is the optimal solution of Problem
(\ref{eq:CR capacity}) with $k=i$. Similarly, $\frac{\partial
C_j\left(\mv{\Gamma}_j\right)}{\partial\Gamma_{ij}}$ and
$\frac{\partial C_j\left(\mv{\Gamma}_j\right)}{\partial\Gamma_{ji}}$
can be obtained from solving Problem (\ref{eq:CR capacity}) via the
Lagrange duality method with $k=j$.

From Proposition \ref{proposition:conditions}, the following
observations can be easily obtained (the proofs are omitted for
brevity):
\begin{itemize}
\item For any particular $\mv{\Gamma}$ that corresponds to a Pareto-optimal rate-tuple,
it must hold that $\Gamma_{ij}\leq \bar{\Gamma}_{ij}$, $\forall i,j,
i\neq j$, where
$\bar{\Gamma}_{ij}=\frac{\|\mv{h}_{ij}^H\mv{h}_{ii}\|^2P_i}{\|\mv{h}_{ii}\|^2}$
corresponds to the case of using maximum transmit power with MRT
beamforming for the $i$th BS;

\item For any particular $\mv{\Gamma}$ that corresponds to a Pareto-optimal rate-tuple,
it must hold that
$\mv{h}_{ij}^H\mv{S}_i^{\star}\mv{h}_{ij}=\Gamma_{ij}$, $\forall
i,j, i\neq j$, i.e., the IT constraints across all BSs must be
tight.
\end{itemize}

From the above observations, we see that if we are only interested
in the values of $\mv{\Gamma}$ that correspond to Pareto-optimal
rate-tuples for the MISO-IC, it is sufficient for us to focus on the
subset of $\mv{\Gamma}$ within the set $\mv{\Gamma}\geq 0$, in which
$\Gamma_{ij}\leq \bar{\Gamma}_{ij}$ and
$\Gamma_{ij}=\mv{h}_{ij}^H\mv{S}_i^{\star}\mv{h}_{ij}, \forall i,j,
i\neq j$.

Based on Proposition \ref{proposition:conditions}, we can develop a
simple rule for different pairs of BSs in the cooperative multi-cell
system to update their mutual IT constraints for improving both of
their transmit rates, while keeping those of the other BSs
unchanged. From the proof of Proposition
\ref{proposition:conditions} given in Appendix \ref{appendix:proof
conditions}, it follows that the method for any updating BS pair
$(i,j)$ to fulfill the above requirements is via changing
$\Gamma_{ij}$ and $\Gamma_{ji}$ according to (\ref{eq:gamma
update}). Note that in general, the choice for $\mv{d}_{ij}$ in
(\ref{eq:gamma update}) to make $\mv{D}_{ij}\mv{d}_{ij}>0$ is not
unique. For notational
conciseness, let $\mv{D}_{ij}={\small\left[\begin{array}{ll} a & b \\
c & d\end{array}\right]}$; it can then be shown that one particular
choice for $\mv{d}_{ij}$ is
\begin{align}\label{eq:d update}
\mv{d}_{ij}={\rm sign}(ad-bc)\cdot[\alpha_{ij}d-b, a-\alpha_{ij}
c]^T
\end{align}
where ${\rm sign}(x)=1$ if $x\geq 0$ and $=-1$ otherwise;
$\alpha_{ij}\geq 0$ is a constant that controls the ratio between
the rate increments for the $i$th and $j$th BSs. It can be easily
verified that when $\alpha_{ij}>1$, a larger rate increment is
resulted for the $i$th BS than that for the $j$th BS, and vice versa
when $\alpha_{ij}<1$ (provided that the step-size $\delta_{ij}$ in
(\ref{eq:gamma update}) is sufficiently small).

More specifically, the procedure for any BS pair $(i,j)$, $i\neq j$,
$i\in\{1,\ldots,K\}$, and $j\in\{1,\ldots,K\}$, to update their
mutual IT constraints is given as follows. First, the $i$th BS
computes the elements $a$ and $b$ in $\mv{D}_{ij}$ according to
(\ref{eq:a}) and (\ref{eq:b}), respectively, with the present value
of $\mv{\Gamma}_i$. Similarly, the $j$th BS computes $c$ and $d$
with the present value of $\mv{\Gamma}_j$. Next, the $i$th BS sends
$a$ and $b$ to the $j$th BS, while the $j$th BS sends $c$ and $d$ to
the $i$th BS. Then, assuming that $\alpha_{ij}$ and $\delta_{ij}$
are preassigned values known to these two BSs, they can both compute
$\mv{d}_{ij}$ according to (\ref{eq:d update}) and update
$\Gamma_{ij}$ and $\Gamma_{ji}$ according to (\ref{eq:gamma update})
in Appendix \ref{appendix:proof conditions}. Last, with the updated
values $\Gamma'_{ij}$ and $\Gamma'_{ji}$, these two BSs reset their
respective beamforming vectors and transmit rates via solving
(\ref{eq:CR capacity}) independently. Note that the above operation
requires only local information (scalar) exchanges between different
pairs of BSs, and thus can be implemented at a very low cost in a
cellular system. One version of the decentralized algorithm for
cooperative downlink beamforming in a multi-cell system is described
in Table \ref{table}. Since in each iteration of the algorithm the
achievable rates for the pair of updating BSs both improve and those
for all other BSs are unaffected (non-decreasing), and the maximum
achievable rates for all BSs are bounded by finite Pareto-optimal
values, the convergence of this algorithm is ensured.

\begin{figure}
\centering
 \epsfxsize=0.7\linewidth
    \includegraphics[width=9cm]{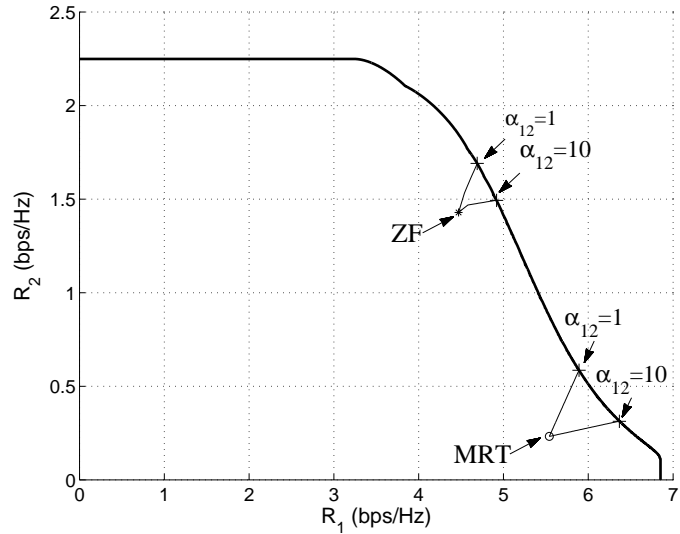}
\caption{Achievable rates for the proposed algorithm in a two-user
MISO Gaussian IC with interference treated as noise.}
\label{fig:rate region new}
\end{figure}

\begin{table}
\centering
\begin{tabular}{|l|}
%\\
%\hline
\hline \hspace*{0.0cm} Initialize $\mv{\Gamma}\geq 0$ in the network
\\ \hspace*{0.0cm} BS $k$ sets $\mv{w}_k$ via solving (\ref{eq:CR
capacity}) with the given $\mv{\Gamma}_k$, $k=1,\ldots,K$ \\
\hspace*{0.0cm} Repeat \\
\hspace*{0.25cm} For $i=1,\ldots,K$, $j=1,\ldots,K, j\neq i$, \\
\hspace*{0.5cm} BS $i$ computes $a$ and $b$ in $\mv{D}_{ij}$ (cf.
(\ref{eq:a}), (\ref{eq:b})) with the given
$\mv{\Gamma}_i$ \\
\hspace*{0.5cm} Similarly, BS $j$ computes $d$ and $c$ in
$\mv{D}_{ij}$ with the given
$\mv{\Gamma}_j$ \\
\hspace*{0.5cm} BS $i$ sends $a$ and $b$ to BS $j$ \\
\hspace*{0.5cm} BS $j$ sends $c$ and $d$ to BS $i$ \\
\hspace*{0.5cm} BS $i/j$ computes $\mv{d}_{ij}$ (cf. (\ref{eq:d
update})), then updates $\Gamma_{ij}$ and $\Gamma_{ji}$ (cf.
(\ref{eq:gamma update})) \\
\hspace*{0.5cm} BS $i/j$ resets $\mv{w}_i/\mv{w}_j$ via solving
(\ref{eq:CR capacity}) with the updated
$\mv{\Gamma}_i/\mv{\Gamma}_j$ \\
\hspace*{0.25cm} End For \\
\hspace*{0.0cm} Until $|\mv{D}_{ij}|=0, \forall i\neq j$. \\
\hline
%\hline
%\\
\end{tabular}
\caption{Algorithm for Cooperative Downlink Beamforming.}
\label{table}
\end{table}

\begin{example}
In Fig. \ref{fig:rate region new} (with the same two-user MISO-IC as
for Fig. \ref{fig:rate region}), we show the Pareto boundary for an
example MISO-IC with $K=2$, $M_1=M_2=3$, $P_1=5, P_2=1$, and
$\sigma_1^2=\sigma_2^2=1$, which is obtained by the proposed method
in this correspondence, i.e., solving the problems given in
(\ref{eq:CR capacity}) for $k=1,2$, and a given pair of values
$\Gamma_{12}$ and $\Gamma_{21}$ with $ 0\leq \Gamma_{12}\leq
\bar{\Gamma}_{12}$ and $ 0\leq \Gamma_{21}\leq \bar{\Gamma}_{21}$,
and then taking a closure operation over the resultant rate-pairs
with all different values of $\Gamma_{12}$ and $\Gamma_{21}$ within
their respective ranges. We demonstrate the effectiveness of the
proposed decentralized algorithm for implementing the multi-cell
cooperative downlink beamforming with two initial rate-pairs,
indicated by ``ZF'' and ``MRT'' in Fig. \ref{fig:rate region new},
which are obtained when both BSs adopt the ZF and the MRT
beamforming vectors, respectively, with their maximum transmit
powers. It is observed that the achievable rates for both MSs
increase with iterations and finally converge to a Pareto-optimal
rate-pair.\footnote{We have verified with a large number of random
channels and a variety of system parameters that the proposed
algorithm always converges to Pareto-optimal rate-pairs for the
two-user MISO-IC with randomly selected initial rate-pairs. However,
we could not prove this result in general by, e.g., showing that the
conditions given in Proposition \ref{proposition:conditions} are not
only necessary (as proved in this work) but also {\it sufficient}
for any given $\mv{\Gamma}$ to achieve a Pareto-optimal rate-tuple
for the MISO-IC.} By comparing the two cases with $\alpha_{12}=1$
and $\alpha_{12}=10$, it is observed that a larger value of
$\alpha_{12}$ results in larger rate values for the first MS in the
converged rate-pairs, which is in accordance with our previous
discussion.
\end{example}

\section{Concluding Remarks} \label{sec:conclusion}

In this correspondence, based on the concept of interference
temperature (IT) and under a cellular downlink setup, we have
developed a new method to characterize the complete Pareto boundary
of the achievable rate region for the $K$-user Gaussian MISO-IC with
interference treated as noise. It is shown that the proposed method
also leads to a new decentralized algorithm for implementing the
downlink beamforming in a cooperative multi-cell system to achieve
maximal rates with a prescribed fairness guarantee.

There are a number of directions along which the developed results
in this work can be further investigated. First, it would be
interesting to extend the multi-cell cooperative beamforming design
based on the principle of IT to the scenario where each BS supports
simultaneous transmissions to {\it multiple} active MSs each with a
single antenna or multiple antennas. Second, it remains yet to be
proved whether the necessary conditions derived in this work for any
particular set of IT constraints across the BSs to guarantee a
Pareto-optimal rate-tuple for the MISO-IC are also {\it sufficient},
even for the special two-user case. This proof is essential for the
proposed downlink beamforming algorithm to achieve the global
convergence (Pareto-optimal rates). Last but not least, it is
pertinent to analyze the proposed decentralized algorithm that
iteratively updates the mutual IT constraints between different
pairs of BSs from a {\it game-theoretical} viewpoint.

\appendices

\section{Proof of Proposition \ref{proposition:optimal CR solution}} \label{appendix:proof optimality rank one}

It can be verified that Problem (\ref{eq:CR capacity}) is convex,
and thus it can be solved by the standard Lagrange duality method
\cite{Boyd}. Let $\lambda_{kj}$, $j\neq k$, and $\lambda_{kk}$ be
the non-negative dual variables for Problem (\ref{eq:CR capacity})
associated with the $k$th BS's IT constraint for the $j$th MS and
its own transmit-power constraint, respectively. The Lagrangian
function for this problem can be written as
\begin{align}\label{eq:Lagrangian}
L(\mv{S}_k,
\mv{\lambda}_k)=\log\left(1+\frac{\mv{h}_{kk}^H\mv{S}_k\mv{h}_{kk}}{\sum_{j\neq
k}\Gamma_{jk}+\sigma_k^2}\right)-\sum_{j\neq
k}\lambda_{kj}(\mv{h}_{kj}^H\mv{S}_k\mv{h}_{kj}-\Gamma_{kj})-\lambda_{kk}({\rm
Tr}(\mv{S}_k)- P_k)
\end{align}
where $\mv{\lambda}_k=[\lambda_{k1},\ldots,\lambda_{kK}]$. The dual
function of Problem (\ref{eq:CR capacity}) is given by
\begin{align}\label{eq:dual function}
g(\mv{\lambda}_k)=\max_{\mv{S}_k\succeq 0} L(\mv{S}_k,
\mv{\lambda}_k).
\end{align}
Accordingly, the dual problem is defined as
\begin{align}\label{eq:dual problem}
\min_{\mv{\lambda_k}\geq0}~ g(\mv{\lambda}_k)
\end{align}
where $\mv{\lambda_k}\geq0$ means component-wise non-negative. Since
Problem (\ref{eq:CR capacity}) is convex with strictly feasible
points \cite{Boyd}, the duality gap between its optimal value and
that of the dual problem is zero; thus, Problem (\ref{eq:CR
capacity}) can be equivalently solved via solving its dual problem.
In order to solve the dual problem, we need to obtain the dual
function $g(\mv{\lambda}_k)$ for any given $\mv{\lambda}_k\geq 0$.
This can be done by solving the maximization problem given in
(\ref{eq:dual function}), which, according to (\ref{eq:Lagrangian}),
can be explicitly written as (by discarding irrelevant constant
terms):
\begin{align}\label{eq:dual optimization}
\mathop{\mathtt{Max.}}_{\mv{S}_k} & ~~~
\log\left(1+\frac{\mv{h}_{kk}^H\mv{S}_k\mv{h}_{kk}}{\sum_{j\neq
k}\Gamma_{jk}+\sigma_k^2}\right)-{\rm
Tr}(\mv{B}_k(\mv{\lambda}_k)\mv{S}_k)
\nonumber \\
\mathtt{s.t.} & ~~~ \mv{S}_k\succeq 0
\end{align}
where $\mv{B}_k(\mv{\lambda}_k)\triangleq \sum_{j\neq
k}\lambda_{kj}\mv{h}_{kj}\mv{h}_{kj}^H+\lambda_{kk}\mv{I}$ and
$\mv{B}_k(\mv{\lambda}_k)\succeq 0$ of dimension $M_k\times M_k$. In
order for Problem (\ref{eq:dual optimization}) to have a bounded
objective value, it is shown as follows that
$\mv{B}_k(\mv{\lambda}_k)$ should be a full-rank matrix. Suppose
that $\mv{B}_k(\mv{\lambda}_k)$ is rank-deficient, such that we
could define $\mv{S}_k=q_k\mv{v}_k\mv{v}_k^H$, where $q_k\geq 0$ and
$\mv{v}_k\in\mathbb{C}^{M_k\times 1}$ satisfying $\|\mv{v}_k\|=1$
and $\mv{B}_k(\mv{\lambda}_k)\mv{v}_k=\mv{0}$. Thereby, the
objective function of Problem (\ref{eq:dual optimization}) reduces
to
\begin{align}\label{eq:dual objective func}
\log\left(1+\frac{q_k\|\mv{h}_{kk}^H\mv{v}_k\|^2}{\sum_{j\neq
k}\Gamma_{jk}+\sigma_k^2}\right).
\end{align}
Due to the independence of $\mv{h}_{kk}$ and $\mv{h}_{kj}$'s, and
thus the independence of $\mv{h}_{kk}$ and $\mv{v}_k$, it follows
that $\|\mv{h}_{kk}^H\mv{v}_k\|>0$ with probability one such that
(\ref{eq:dual objective func}) goes to infinity by letting
$q_k\rightarrow \infty$. Since the optimal value of Problem
(\ref{eq:CR capacity}) must be bounded, without loss of generality,
we only need to consider the subset of $\mv{\lambda}_k$ in the set
$\mv{\lambda}_k\geq 0$ to make $\mv{B}_k(\mv{\lambda}_k)$ full-rank.

\begin{remark}\label{remark:KKT}
Note that from the definition of $\mv{B}_k(\mv{\lambda}_k)$ and the
Karush-Kuhn-Tucker (KKT) optimality conditions \cite{Boyd} of
Problem (\ref{eq:CR capacity}), it follows that
$\mv{B}_k(\mv{\lambda}_k)$ is full-rank only when either of the
following two cases occurs:
\begin{itemize}
\item $\lambda_{kk}>0$: In this case, the transmit power constraint
for the $k$th BS is tight for Problem (\ref{eq:CR capacity});

\item $\lambda_{kk}=0$, but there are at least $M_k$ $\lambda_{kj}$'s, $j\neq k$, having $\lambda_{kj}>0$: In this case,
regardless of whether the transmit power constraint for the $k$th BS
is tight, there are at least $M_k$ out of the $K-1$ IT constraints
of the $k$th BS are tight in Problem (\ref{eq:CR capacity}). Note
that this case can be true only when $M_k\leq K-1$.
\end{itemize}
\end{remark}

From the above discussions, it is known that
$(\mv{B}_k(\mv{\lambda}_k))^{-1}$ exists. Thus, we can introduce a
new variable $\bar{\mv{S}}_k$ for Problem (\ref{eq:dual
optimization}) as
\begin{align}\label{eq:transform}
\bar{\mv{S}}_k=(\mv{B}_k(\mv{\lambda}_k))^{1/2}\mv{S}_k(\mv{B}_k(\mv{\lambda}_k))^{1/2}
\end{align}
and substituting it into (\ref{eq:dual optimization}) yields
\begin{align}\label{eq:dual optimization new}
\mathop{\mathtt{Max.}}_{\bar{\mv{S}}_k} & ~~~
\log\left(1+\frac{\mv{h}_{kk}^H(\mv{B}_k(\mv{\lambda}_k))^{-1/2}\bar{\mv{S}}_k(\mv{B}_k(\mv{\lambda}_k))^{-1/2}\mv{h}_{kk}}{\sum_{j\neq
k}\Gamma_{jk}+\sigma_k^2}\right)-{\rm Tr}(\bar{\mv{S}}_k)
\nonumber \\
\mathtt{s.t.} & ~~~ \bar{\mv{S}}_k\succeq 0.
\end{align}
Without loss of generality, we can express $\bar{\mv{S}}_k$ into its
eigenvalue decomposition (EVD) as
$\bar{\mv{S}}_k=\mv{U}_k\mv{\Theta}_k\mv{U}_k^H$, where
$\mv{U}_k=[\mv{u}_{k1},\ldots,\mv{u}_{kM_k}]\in\mathbb{C}^{M_k\times
M_k}$ is unitary and $\mv{\Theta}_k={\rm
Diag}([\theta_{k1},\ldots,\theta_{kM_k}])\succeq 0$. Substituting
the ED of $\bar{\mv{S}}_k$ into (\ref{eq:dual optimization new})
yields
\begin{align}\label{eq:dual optimization new 2}
\mathop{\mathtt{Max.}}_{\mv{U}_k, \mv{\Theta}_k} & ~~~
\log\left(1+\frac{\sum_{i=1}^{M_k}\theta_{ki}\|\mv{h}_{kk}^H(\mv{B}_k(\mv{\lambda}_k))^{-1/2}\mv{u}_{ki}\|^2}{\sum_{j\neq
k}\Gamma_{jk}+\sigma_k^2}\right)-\sum_{i=1}^{M_k}\theta_{ki}
\nonumber \\
\mathtt{s.t.} & ~~~ \|\mv{u}_{ki}\|=1, \forall i, ~~
\mv{u}_{ki}^H\mv{u}_{kl}=0, \forall l\neq i \nonumber \\ &~~~
\theta_{ki}\geq 0, \forall i.
\end{align}
For any given $\mv{U}_k$, it can be verified that the optimal
solution of $\mv{\Theta}_k$ for Problem (\ref{eq:dual optimization
new 2}) is given by
\begin{align}\label{eq:BF vector}
\theta_{ki}=\left\{\begin{array}{ll}
\left(\frac{1}{\ln2}-\frac{\sum_{j\neq
k}\Gamma_{jk}+\sigma_k^2}{\|\mv{h}_{kk}^H(\mv{B}_k(\mv{\lambda}_k))^{-1/2}\mv{u}_{ki}\|^2}\right)^+
& {\rm if} ~ i=\mathop{\arg}\max_{l\in\{1,\ldots,M_k\}}
\|\mv{h}_{kk}^H(\mv{B}_k(\mv{\lambda}_k))^{-1/2}\mv{u}_{kl}\| \\ 0 &
{\rm otherwise}. \end{array} \right.
\end{align}
Thus, it follows that for the optimal solution of Problem
(\ref{eq:dual optimization new}), ${\rm Rank}(\bar{\mv{S}}_k)\leq
1$. Furthermore, let $i'$ denote the index of $i$ for which
$\theta_{ki'}\geq 0$. The objective function of Problem
(\ref{eq:dual optimization new 2}) reduces to
\begin{align}
\log\left(1+\frac{\theta_{ki'}\|\mv{h}_{kk}^H(\mv{B}_k(\mv{\lambda}_k))^{-1/2}\mv{u}_{ki'}\|^2}{\sum_{j\neq
k}\Gamma_{jk}+\sigma_k^2}\right)-\theta_{ki'}.
\end{align}
Clearly, the above function is maximized with any $\theta_{ki'}>0$
when
\begin{align}\label{eq:power}
\mv{u}_{ki'}=\frac{(\mv{B}_k(\mv{\lambda}_k))^{-1/2}\mv{h}_{kk}}{\|(\mv{B}_k(\mv{\lambda}_k))^{-1/2}\mv{h}_{kk}\|}.
\end{align}
From (\ref{eq:BF vector}) and (\ref{eq:power}), it follows that the
optimal solution for Problem (\ref{eq:dual optimization new}) is
\begin{align}
\bar{\mv{S}}_k=\frac{\left(\frac{1}{\ln2}-\frac{\sum_{j\neq
k}\Gamma_{jk}+\sigma_k^2}{\|\mv{h}_{kk}^H(\mv{B}_k(\mv{\lambda}_k))^{-1/2}\|^2}\right)^+}
{\|(\mv{B}_k(\mv{\lambda}_k))^{-1/2}\mv{h}_{kk}\|^2}(\mv{B}_k(\mv{\lambda}_k))^{-1/2}\mv{h}_{kk}\mv{h}_{kk}^H(\mv{B}_k(\mv{\lambda}_k))^{-1/2}.
\end{align}
Combining the above solution and (\ref{eq:transform}), it can be
easily shown that the optimal solution $\mv{S}_k$ for Problem
(\ref{eq:CR capacity}) is as given by Proposition
\ref{proposition:optimal CR solution}.

With the obtained dual function $g(\mv{\lambda}_k)$ for any given
$\mv{\lambda_k}$, the dual problem (\ref{eq:dual problem}) can be
solved by searching over $\mv{\lambda_k}\geq0$ to minimize
$g(\mv{\lambda}_k)$. This can be done via, e.g., the ellipsoid
method \cite{BGT81}, by utilizing the subgradient of
$g(\mv{\lambda}_k)$ that is obtained from (\ref{eq:Lagrangian}) as
$\Gamma_{kj}-\mv{h}_{kj}^H\mv{S}^*_k\mv{h}_{kj}$ for $\lambda_{kj},
j\neq k$ and $P_k-{\rm Tr}(\mv{S}_k^*)$ for $\lambda_{kk}$, where
$\mv{S}^*_k$ is the optimal solution for Problem (\ref{eq:dual
optimization}) with the given $\mv{\lambda}_k$. When
$\mv{\lambda_k}$ converges to the optimal solution for the dual
problem, the corresponding $\mv{S}^*_k$ becomes the optimal solution
for Problem (\ref{eq:CR capacity}). Proposition
\ref{proposition:optimal CR solution} thus follows.

\section{Proof of Proposition \ref{proposition:relationship}} \label{appendix:proof relationship}

Since the given set of $\mv{S}_1,\ldots,\mv{S}_K$ achieves the
Pareto-optimal rate-tuple  $(R_1,\ldots,R_K)$ for the MISO-IC, from
(\ref{eq:user rates}) and (\ref{eq:rate region}) it follows that for
any $k\in\{1,\ldots,K\}$
\begin{align}\label{eq:Rk}
R_k=\log\left(1+\frac{\mv{h}_{kk}^H\mv{S}_k\mv{h}_{kk}}{\sum_{j\neq
k}\mv{h}_{jk}^H\mv{S}_j\mv{h}_{jk}+\sigma_k^2}\right).
\end{align}
Since $\Gamma_{jk}=\mv{h}_{jk}^H\mv{S}_j\mv{h}_{jk}, \forall j\neq
k$, (\ref{eq:Rk}) can be rewritten as
\begin{align}\label{eq:Rk new}
R_k=\log\left(1+\frac{\mv{h}_{kk}^H\mv{S}_k\mv{h}_{kk}}{\sum_{j\neq
k}\Gamma_{jk}+\sigma_k^2}\right).
\end{align}
Note that (\ref{eq:Rk new}) is the same as the objective function of
Problem (\ref{eq:CR capacity}) for the given $k$. Furthermore, from
(\ref{eq:rate region}) it follows that ${\rm Tr}(\mv{S}_k)\leq P_k$.
Using this and the fact that
$\Gamma_{kj}=\mv{h}_{kj}^H\mv{S}_k\mv{h}_{kj}, \forall j\neq k$, it
follows that $\mv{S}_k$  satisfies the constraints given in Problem
(\ref{eq:CR capacity}) for the given $k$. Therefore, $\mv{S}_k$ must
be a feasible solution for Problem (\ref{eq:CR capacity}) with the
given $k$ and $\mv{\Gamma}_k$.

Next, we need to prove that $\mv{S}_k$ is indeed the optimal
solution of Problem (\ref{eq:CR capacity}) for any given $k$, and
thus the corresponding achievable rate $R_k$ is equal to the optimal
value of Problem (\ref{eq:CR capacity}), which is
$C_k(\mv{\Gamma}_k)$. We prove this result by contradiction. Suppose
that the optimal solution for Problem (\ref{eq:CR capacity}),
denoted by $\mv{S}_k^{\star}$, is not equal to $\mv{S}_k$ for a
given $k$. Thus, we have
\begin{align}
R_k<&\log\left(1+\frac{\mv{h}_{kk}^H\mv{S}_k^{\star}\mv{h}_{kk}}{\sum_{j\neq
k}\Gamma_{jk}+\sigma_k^2}\right) \\
=&
\log\left(1+\frac{\mv{h}_{kk}^H\mv{S}_k^{\star}\mv{h}_{kk}}{\sum_{j\neq
k}\mv{h}_{jk}^H\mv{S}_j\mv{h}_{jk}+\sigma_k^2}\right)\triangleq r_k.
\end{align}

Furthermore, since $\mv{h}_{kj}^H\mv{S}_k^{\star}\mv{h}_{kj}\leq
\Gamma_{kj}, \forall j\neq k$, we have for any $j\neq k$,
\begin{align}\label{eq:Rj}
R_j=&\log\left(1+\frac{\mv{h}_{jj}^H\mv{S}_j\mv{h}_{jj}}{\sum_{i\neq
j}\mv{h}_{ij}^H\mv{S}_i\mv{h}_{ij}+\sigma_j^2}\right) \\
=&\log\left(1+\frac{\mv{h}_{jj}^H\mv{S}_j\mv{h}_{jj}}{\sum_{i\neq
j}\Gamma_{ij}+\sigma_j^2}\right) \\
\leq &
\log\left(1+\frac{\mv{h}_{jj}^H\mv{S}_j\mv{h}_{jj}}{\sum_{i\neq j,
k}\Gamma_{ij}+\mv{h}_{kj}^H\mv{S}_k^{\star}\mv{h}_{kj}+\sigma_j^2}\right)
\\
=& \log\left(1+\frac{\mv{h}_{jj}^H\mv{S}_j\mv{h}_{jj}}{\sum_{i\neq
j,k}\mv{h}_{ij}^H\mv{S}_i\mv{h}_{ij}+\mv{h}_{kj}^H\mv{S}_k^{\star}\mv{h}_{kj}+\sigma_j^2}\right)
\triangleq r_j.
\end{align}
Thus, for another set of transmit covariance matrices given by
$\mv{S}_1,\ldots,\mv{S}_{k-1},\mv{S}_k^{\star},\mv{S}_{k+1},\ldots,\mv{S}_K$,
the corresponding achievable rate-tuple for the MISO-IC,
$(r_1,\ldots,r_K)$, satisfies that $r_k>R_k$ and $r_j\geq R_j,
\forall j\neq k$, which contradicts the fact that $(R_1,\ldots,R_K)$
is a Pareto-optimal rate-tuple for the MISO-IC. Hence, the
presumption that $\mv{S}_k\neq\mv{S}_k^{\star}$ for any given $k$
cannot be true. Thus, we have $\mv{S}_k=\mv{S}_k^{\star}$ and
$R_k=C_k(\mv{\Gamma}_k), \forall k$. Proposition
\ref{proposition:relationship} thus follows.

\section{Proof of Proposition \ref{proposition:conditions}} \label{appendix:proof conditions}

As given in Proposition \ref{proposition:conditions}, with
$\mv{\Gamma}$, the corresponding optimal values of the problems in
(\ref{eq:CR capacity}) for all $k$'s, $C_k(\mv{\Gamma}_k)$'s,
correspond to a Pareto-optimal rate-tuple for the MISO-IC, denoted
by $(R_1,\ldots,R_K)$. Let $\mv{S}_1,\ldots,\mv{S}_K$ denote the set
of optimal solutions for the problems in (\ref{eq:CR capacity}). We
thus have
\begin{align}\label{eq:R k renew}
C_k(\mv{\Gamma}_k)=R_k=\log\left(1+\frac{\mv{h}_{kk}^H\mv{S}_k\mv{h}_{kk}}{\sum_{j\neq
k}\Gamma_{jk}+\sigma_k^2}\right), ~ k=1,\ldots,K.
\end{align}
Next, we prove Proposition \ref{proposition:conditions} by
contradiction. Suppose that there exists a pair of $(i,j)$ with
$|\mv{D}_{ij}|\neq 0$, where $\mv{D}_{ij}$ is defined in
(\ref{eq:D}). Define a new $\mv{\Gamma}'$ over $\mv{\Gamma}$, where
all the elements in $\mv{\Gamma}$ remain unchanged except
$[\Gamma_{ij}, \Gamma_{ji}]^T$ being replaced by
\begin{align}\label{eq:gamma update}
[\Gamma'_{ij}, \Gamma'_{ji}]^T=[\Gamma_{ij},
\Gamma_{ji}]^T+\delta_{ij}\cdot\mv{d}_{ij}
\end{align}
where $\delta_{ij}>0$ is a small step-size, and $\mv{d}_{ij}$ is any
vector that satisfies $\mv{D}_{ij}\mv{d}_{ij}> 0$ (component-wise),
with one possible value for such $\mv{d}_{ij}$ is given by
(\ref{eq:d update}) in the main text. With $\mv{\Gamma}'$, the
optimal solutions for the problems in (\ref{eq:CR capacity}) remain
unchanged $\forall k\neq i,j$, while for those with $k=i$ and $k=j$,
the optimal solutions are changed to be $\mv{S}_i^{\star}$ and
$\mv{S}^{\star}_j$, respectively. Accordingly, the new achievable
rates in the MISO-IC for any $k\neq i,j$ are given by
\begin{align}
r_k=&\log\left(1+\frac{\mv{h}_{kk}^H\mv{S}_k\mv{h}_{kk}}{\sum_{l\neq
k,i,j}\mv{h}_{lk}^H\mv{S}_l\mv{h}_{lk}+\mv{h}_{ik}^H\mv{S}_i^{\star}\mv{h}_{ik}+\mv{h}_{jk}^H\mv{S}_j^{\star}\mv{h}_{jk}
+\sigma_k^2}\right) \\
=&\log\left(1+\frac{\mv{h}_{kk}^H\mv{S}_k\mv{h}_{kk}}{\sum_{l\neq
k,i,j}\Gamma_{lk}+\mv{h}_{ik}^H\mv{S}_i^{\star}\mv{h}_{ik}+\mv{h}_{jk}^H\mv{S}_j^{\star}\mv{h}_{jk}
+\sigma_k^2}\right) \\
\geq & R_k \label{eq:inequality}
\end{align}
where (\ref{eq:inequality}) is due to  (\ref{eq:R k renew}) and the
facts that $\mv{h}_{ik}^H\mv{S}_i^{\star}\mv{h}_{ik}\leq
\Gamma_{ik}$ and $\mv{h}_{jk}^H\mv{S}_j^{\star}\mv{h}_{jk}\leq
\Gamma_{jk}$. Also, it can be shown that
\begin{align}
r_i=&\log\left(1+\frac{\mv{h}_{ii}^H\mv{S}_i^{\star}\mv{h}_{ii}}{\sum_{l\neq
i,j}\mv{h}_{li}^H\mv{S}_l\mv{h}_{li}+\mv{h}_{ji}^H\mv{S}_j^{\star}\mv{h}_{ji}
+\sigma_i^2}\right) \\
=&\log\left(1+\frac{\mv{h}_{ii}^H\mv{S}_i^{\star}\mv{h}_{ii}}{\sum_{l\neq
i,j}\Gamma_{li}+\mv{h}_{ji}^H\mv{S}_j^{\star}\mv{h}_{ji}+\sigma_i^2}\right) \\
\geq & C_i(\mv{\Gamma}'_i) \label{eq:inequality 2}
\end{align}
where (\ref{eq:inequality 2}) is due to the facts that
$\mv{h}_{ji}^H\mv{S}_j^{\star}\mv{h}_{ji}\leq \Gamma'_{ji}$ and
$\mv{S}_i^{\star}$ achieves the optimal value of Problem (\ref{eq:CR
capacity}) with $k=i$ and the given $\mv{\Gamma}'_i$, denoted by
$C_i(\mv{\Gamma}'_i)$. Similarly, it can be shown that $r_j\geq
C_j(\mv{\Gamma}'_j)$. Thus, from (\ref{eq:gamma update}) and
$\mv{D}_{ij}\mv{d}_{ij}> 0$, it follows that with sufficiently small
$\delta_{ij}$
\begin{align}
\left[\begin{array}{l} r_i \\ r_j \end{array} \right] \geq&
\left[\begin{array}{l} C_i(\mv{\Gamma}'_i)  \\
C_j(\mv{\Gamma}'_j)\end{array} \right]\\
\cong & \left[\begin{array}{l} C_i(\mv{\Gamma}_i)  \\
C_j(\mv{\Gamma}_j)\end{array}
\right]+\delta_{ij}\mv{D}_{ij}\mv{d}_{ij} \\
> & \left[\begin{array}{l} R_i \\ R_j \end{array} \right].
\end{align}
Therefore, we have found a new set of achievable rate-tuple for the
MISO-IC with $\mv{\Gamma}'$, $(r_1,\ldots,r_K)$, which has $r_i>
R_i$, $r_j> R_j$, and $r_k\geq R_k, \forall k\neq i,j$. Clearly,
this contradicts the fact that $(R_1,\ldots,R_K)$ is Pareto-optimal
for the MISO-IC. Thus, the presumption that there exists a pair of
$(i,j)$ with $|\mv{D}_{ij}|\neq 0$ cannot be true. Proposition
\ref{proposition:conditions} thus follows.

\end{document}